\newtheorem{thm}{Theorem}
\definecolor{c1}{RGB}{104,136,245}
\definecolor{c2}{RGB}{215,112,113}
\newcommand\D{\mathrm{d}}
\newcommand\hrho{\hat{\rho}}
\newcommand\hrhos{\hat{\rho}_\text{ss}}
\newcommand\sL{\mathscr{L}}
\newcommand\sA{\mathscr{a}}
\newcommand\sB{\mathscr{b}}
\newcommand\Le{L_\text{eff}}
\begin{document}

\title{Absence of dissipation-free topological edge states in quadratic open fermions}
\author{Liang Mao}
\email{liangmao.physics@gmail.com}
\affiliation{Institute for Advanced Study, Tsinghua University, Beijing,100084, China}
\affiliation{Institute for Quantum Information and Matter, California Institute of Technology, Pasadena, CA 91125, USA}

\date{\today}

\begin{abstract}
    We prove a no-go theorem: generic quadratic open fermionic systems governed by Lindblad master
equations do not host dissipation-free topological edge states protected by the dissipation gap. By
analogy with topological insulators and superconductors, we map the Lindblad generator to a first-
quantized non-Hermitian matrix representation that plays the role of a band Hamiltonian. Edge
modes of this matrix with vanishing real part are exactly dissipation-free. We show that this matrix
is always adiabatically deformable, through a symmetry-preserving path, to a topologically trivial
Hermitian matrix. Hence no symmetry-protected, dissipation-free edge modes exist in quadratic
open fermions. Our results apply to generic quadratic fermionic Lindbladians and require only
a gapped bulk and a bounded spectrum. They establish a clear boundary for robust topological
phenomena in open fermionic systems.
\end{abstract}

\maketitle

\textit{\color{blue} Introduction.}
Symmetry-protected topological phases of matter, featuring the robust gapless edge states, represents a cornerstone in modern understanding of quantum many-body physics, driving comprehensive theoretical and experimental advances\cite{hasan2010colloquium,qi2011topological,wen2017colloquium}. While traditionally explored in isolated systems, a compelling new frontier has emerged recently: the study of topological phases within open quantum systems\cite{diehl2011topology,bardyn2013topology,ma2023average,ma2025topological,fan2024diagnostics}. Unlike their closed counterparts, open quantum systems are inherently dynamical and typically described by a non-Hermitian quantum master equation\cite{open1,open2}, giving rise to distince topological features\cite{lee2023quantum,bao2023mixed,sang2024mixed,sohal2025noisy,ma2025symmetry,ellison2025toward,wang2025anomaly,lessa2025mixed,guo2025locally,wang2025intrinsic,zemin1,zemin3}.

Within the realm of dissipative topological phenomena, quadratic open fermion systems are arguably the simplest and earliest studied models\cite{3nd1,3nd2,guo2017solutions,yikang,barthel2022solving}. Described by the Lindblad master equations that are quadratic with fermion modes, these systems are the dissipative analogs of well-understood topological insulators and superconductors\cite{kitaev2009periodic,ryu2010topological,wen2012symmetry,chiu2016classification}, thus natural platforms to investigate dissipation-induced topological effects. Initially, these models were explored for their utility in the dissipative preparation of topological states\cite{verstraete2009quantum,budich2015dissipative,iemini2016dissipative,goldman2016topological,goldstein2019dissipation,flynn2021topology}. More recently, out-of-equilibrium topological phenomena are extensively explored, revealing distinctive features such as symmetry-protected transient topological modes\cite{lieu2020tenfold,kawasaki2022topological},
dissipation-driven topological phase of mixed state\cite{tonielli2020topological,altland2021symmetry,mao2024symmetry,mix1,mix2,mix3,mix4}, and novel dissipative dynamics from non-Hermitian skin effect\cite{song2019non,stefano2020unraveling,xue2021simple,haga2021liouvillian,okuma2021quantum}. 

However, despite this exhaustive progress,  the hallmark of topological phases---robust, dissipation-free, symmetry-protected topological edge states---has not been fully realized.
In open systems, a natural generalization of topological edge states require not only protected by the bulk topology, but also robust against dissipation, implying they contribute to the degeneracy of non-equilibrium steady state (NESS). Due to the stringent demand of stability, finding edge states that are both dissipation-free and symmetry-protected remains an elusive challenge.

Here, we provide a negative answer to this problem under general conditions.
We establish a no-go theorem that quadratic open fermionic systems, described by Lindblad master equations 
\begin{align}
	\frac{\D }{\D t}\hrho=\sL[\hrho] =-i[\hat{H},\hrho]+\sum_\mu \big(2\hat{L}_\mu\hrho\hat{L}_\mu^\dagger-\{\hat{L}_\mu^\dagger \hat{L}_\mu,\hrho  \} \big),
\end{align}
 cannot host dissipation-free topological edge states, as long as it has a gapped bulk and bounded spectrum.  Drawing an analogy to topological insulators and superconductors\cite{3nd1,3nd2}, we extract the non-Hermitian matrix representation $\Le$ of $\sL$ which effectively encodes the ``band structure" like their Hermitian counterparts. Any potential topological edge modes of $\Le$ with zero real eigenvalues do not experience dissipation, thus corresponding to the dissipation-free topological edge states. Next, we prove that $\Le$ can always be adiabatically connected to a Hermitian matrix, whose band topology is completely trivial. The adiabatic connection path preserves all the unitary and anti-unitary symmetries of $\Le$, making our results applicable to generic symmetry-protected topology. Thus, we conclude that $\Le$ cannot have zero-energy topological edge modes protected by symmetries, and eliminating the possibility of dissipation-free topological edge states of $\sL$.

\textit{\color{blue} Quadratic Lindbladian and dissipation-free edge states.}
Below we focus on number-conserving complex-fermion
systems for clarity. The Majorana case is discussed later.
Consider  Lindbladians $\sL$ that satisfy the following conditions:
$\hat{H}=\sum_{ij}H_{ij}\hat{c}_i^\dagger \hat{c}_j$ is a quadratic operator of $N$ fermion operators $\hat{c}^\dagger$ and $\hat{c}$. $\hat{L}_\mu$ is linear superpositions of either creation operators $\hat{L}_\mu=\sum_i D^g_{\mu i}\hat{c}_i^\dagger$ or annihilation operators $\hat{L}_\mu=\sum_iD^l_{\mu i}\hat{c}_i$. We also introduce $(M_l)_{ij}=\sum_\mu D^{l*}_{\mu i}D^l_{\mu j}$ and $(M_g)_{ij}=\sum_{\mu}D^{g*}_{\mu i}D^g_{\mu j}$ to characterize the structure of loss and gain. We note both $M_l$ and $M_g$ are positive semi-definite.

$\sL$ plays a similar role for open quantum systems as the many-body Hamiltonian $\hat{H}$ for closed systems. $\sL[\hrho]$ is a linear map from an operator to another operator, often called a \textit{superoperator}. Superoperators are essentially the same as many-body operators on the algebraic level, albeit they act in the operator space. The NESS $\hrhos$ is the eigen-operator of $\sL$ with zero eigenvalue, which is the largest real eigenvalue since other eigen-operator all decay in time. $\hrhos$ plays a role for $\sL$ analogous to that of the many-body ground state for $\hat{H}$.
As pointed out in Refs.~\cite{3nd1,3nd2}, a quadratic $\sL$ has the same structure as quadratic Hamiltonian $\hat{H}$, for it can be written as a bilinear form with a set of fermionic superoperators
\begin{align}
	\sA_i[\hrho]\equiv \hat{c}_i\hrho,&\quad
	\sB_i[\hrho]\equiv \mathcal{P}^F[\hrho]\hat{c}_i,\notag\\
    \sA_i^\dagger[\hrho]\equiv \hat{c}_i^\dagger\hrho,&\quad
	\sB_i^\dagger[\hrho]\equiv -\mathcal{P}^F[\hrho]\hat{c}_i^\dagger
\end{align}
 Here $\mathcal{P}^F[\hrho]=(-1)^{\sum_i\hat{c}_i^\dagger\hat{c}_i}\hrho(-1)^{\sum_i\hat{c}_i^\dagger\hat{c}_i}$ is introduced to restore the fermion anti-commutation relations. $\sA_i$, $\sB_i$, $\sA_i^\dagger$ and $\sB_i^\dagger$ have exactly the same anti-commutation properties as conventional fermion operators.

With $\sA$ and $\sB$, we are able to write $\sL$ as a bilinear form\footnote{$\sL$ has two subspaces, $\sL_\pm$,  defined by $\mathcal{P}^F=\pm 1$. Here what we show is actually $\sL_+$, with matrix representation denoted as $\Le$. On one hand, the $+1$ subspace must contain at least one steady state (but the $-1$ subspace may not). On the other hand, the matrix representation of $-1$ subspace is $\Le'=(\sigma_z\otimes \mathbb{I}_N)\Le(\sigma_z\otimes \mathbb{I}_N)^\dagger$. So $\Le$ and $\Le'$ have the same properties. To maintain the simplicity of discussion, we abuse the terminology and call $\Le$ the matrix representation of the whole Lindbladian $\sL$. In below we will show $\Le$ only have trivial band topology. Thus, the band topology of $\Le'$ is also trivial. These togethor are sufficient to exclude the symmetry-protected edge dark states of $\sL$.   }, 
\begin{align}\label{rep}
	\sL&=
	\begin{pmatrix}
		\sA^\dagger &\sB^\dagger
	\end{pmatrix}
	\begin{pmatrix}
		-iH-M_l+M_g^T & 2M_g^T\\ 2M_l & -iH+M_l-M_g^T
	\end{pmatrix}
	\begin{pmatrix}
		\sA \\\sB
	\end{pmatrix}
	\notag\\
	&\quad+\text{Tr} (iH-M_l-M_g)\notag\\
	&=\sL'+\text{Tr} (iH-M_l-M_g)
\end{align}
Here $\sA$ is a shorthand of column vector $(\sA_1,\cdots,\sA_N)^T$, and the same for $\sB$.
The constant term $\text{Tr}(\cdots)$ guarantees that all the eigenvalues of $\sL$ have non-positive real part. We denote the non-Hermitian matrix representation of the shifted Lindbladian $\sL'$ as $\Le$. 

When $\Le$ is diagonalizable with eigenvalues $\epsilon_i$, $\sL'$ can be written as $\sL'=\sum_{i=1}^{2N} \epsilon_i\overline{\beta_i}\beta_i$. Here $\beta_i$ and $\overline{\beta_i}$ are a set of biorthogonal fermion modes\footnote{Here by biorthogonal we mean $\{\overline{\beta_i},\beta_j\}=\delta_{ij}$, but $\{\beta_i,\beta_j\}$ is not necessarily zero. The is because $\Le$ is a non-Hermitian matrix. In general it can only be diagonalized by similar transformations. However, it does not change the eigenstate structure. It is easy to check that eigenstates of $\sL'$ can also be obtained by acting $\overline{\beta_i}$ on the vacuum.}.
Like its Hermitian counterpart, we now have an operator-vacuum $\hrho_v$ defined by to annihilated by all $\beta_i$s: $\beta_i[\hrho_v]=0$. Eigen-operators of $\sL'$ (and $\sL$)
can be obtained by filling the fermion modes on top of $\hrho_v$ as $\hrho_{i_1,\cdots i_m}=\overline{\beta_{i_1}}[\overline{\beta_{i_2}}[\cdots \overline{\beta_{i_m}}[\hrho_v]\cdots]]$. Remarkably, $\hrhos$, the eigen-operator of $\sL$ with largest real eigenvalue zero, is also the eigen-operator of $\sL'$ with the largest real eigenvalue. Thus, it is obtained by filling all the eigenmodes of $\Le$ whose eigenvalues have a positive real part: 
\begin{align}\label{ness}
    \hrhos=\big(\prod_{\text{Re}\epsilon_i>0}\overline{\beta_i}\big)[\rho_v].
\end{align}

\begin{figure}[!t]
    \centering
    \includegraphics[width=1\linewidth]{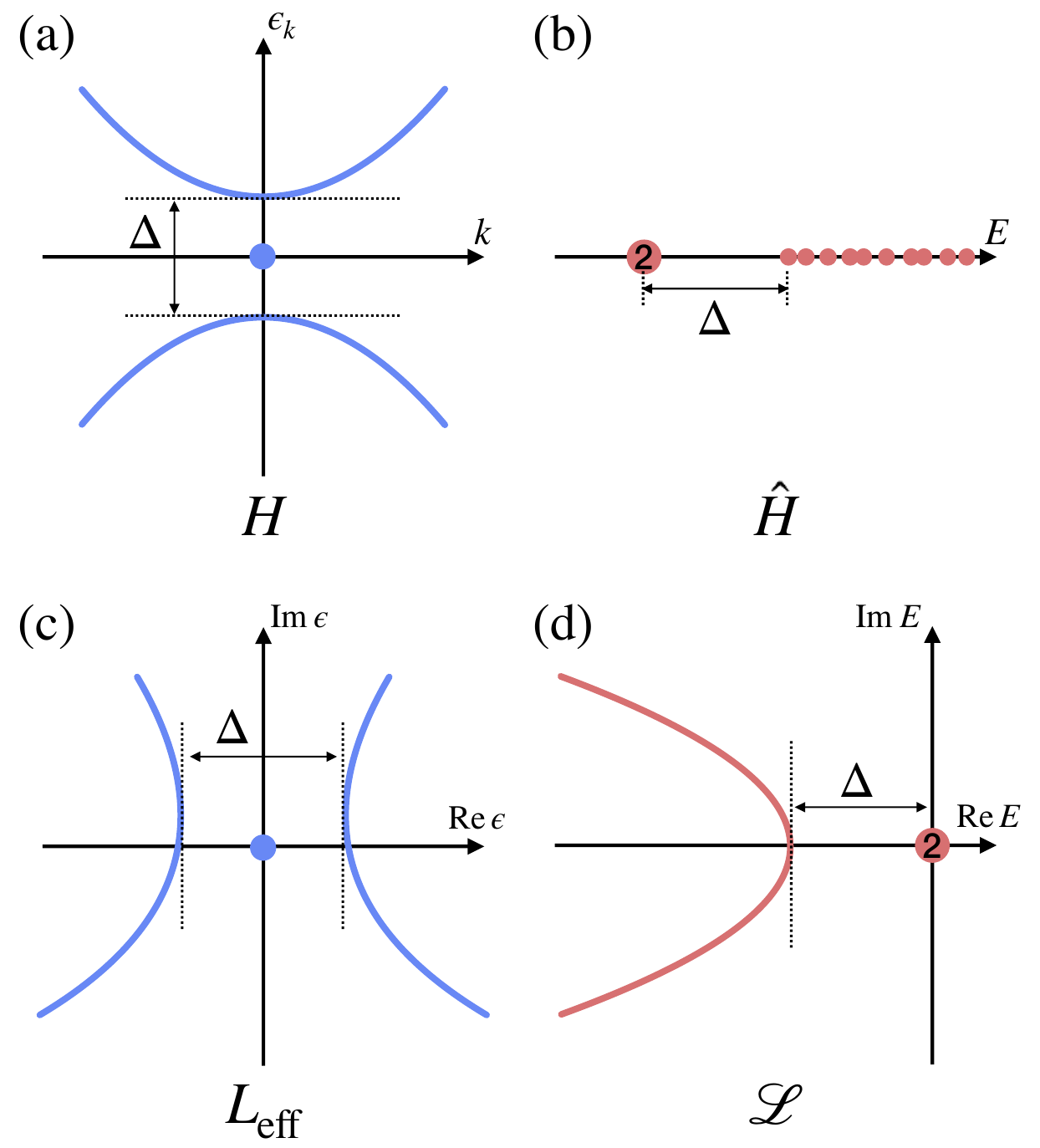}
    \caption{Illustration of energy spectrum, energy gap and edge states of $H$, $\hat{H}$, $\Le$ and $\sL$. We use \textcolor{c1}{\textbullet} and \textcolor{c2}{\textbullet} to distinguish between single-body and many-body spectra, respectively. Number 2 denotes degenerated ground (steady) states.  (a) For closed systems, the topological edge state has zero energy and lies between two bands. This is manifested by the energy spectrum of $H$. (b) Topological edge state of $H$ causes degenerated ground states of $\hat{H}$. (c) Similar to the closed systems, for open systems the dissipation-free edge state has real eigenvalue zero, and lies between two bands. (d) The dissipation-free edge state of $\Le$ causes degenerated steady states of $\sL$. }
    \label{band}
\end{figure}

When $\Le$ possesses an eigenmode $\beta_t$ whose real eigenvalue is zero, $\overline{\beta}_t[\hrhos]$ also acts as the zero eigen-operator of $\sL$, thus is a steady state. This implies that zero-modes of $\Le$ with zero real eigenvalues experience no dissipation. Consequently, any topological edge modes of $\Le$ with zero real eigenvalues, if present, correspond to  robust, dissipation-free topological edge states of $\sL$.
This mirrors the Hermitian setting described by quadratic fermionic Hamiltonians $\hat{H}=\sum_{ij}H_{ij}\hat{c}_i^\dagger \hat{c}_j$, where zero energy edge eigenmode of $H$ contributes to the ground state degeneracy of $\hat{H}$. 
To better illustrate the dissipation-free topological edge states, we compare the single-body and many-body spectra of closed and open systems in Fig.~\ref{band}.

The bulk-edge correspondence principle\cite{bl2,bl3,bl4,bl5} dictates that robust edge modes arise from non-trivial bulk band topology. We therefore work with   periodic boundary condition and assume a gapped bulk for $\Le$ , meaning all eigenvalues $\lambda_i$  satisfy $\operatorname{Re}\lambda_i\in(-\infty,-\Delta/2]\cup[\Delta/2,\infty)$. This ``line-gap" is  naturally inherited from the dissipation gap of $\sL$(Fig.~\ref{band}).
We call bands
with $\operatorname{Re}\epsilon > 0$ “occupied”, since occupying them yields
the NESS via Eq.\eqref{ness}.
 We will  also assume  a bounded spectrum for technical convenience.
Our goal is to show that, under these assumptions, the occupied bands of $\Le$ are topologically trivial, precluding dissipation-free topological edge states.

Before proceeding, we note that non-Hermitian matrices may support robust edge modes not captured by conventional band topology, such as those due to the non-Hermitian skin effect\cite{yao2018edge,kunst2018biorthogonal,okuma2020topological,skineffect,zhang2020correspondence,xiao2020non,NHP}. 
In our setting this subtlety is circumvented: $\Le$, actually any line-gapped non-Hermitian matrices, can be continuously deformed to a Hermitian matrix\cite{bl2,bl3,NHP}, as we make explicit below. Thus, only Hermitian band topology necessitates consideration.

\textit{\color{blue} The quasi-block-diagonalizable structure.}
One crucial property we utilize is the quasi-block-diagonalizable structure of $\Le$.
 Namely, there is a similarity transformation that transforms $\Le$ to a block-diagonal form\cite{wang2024non}
\begin{equation}
\begin{gathered}
S = \frac{1}{\sqrt{2}}
    \begin{pmatrix}
        1 & 1\\ 1+Z& Z-1
    \end{pmatrix},
    S^{-1} = \frac{1}{\sqrt{2}}
    \begin{pmatrix}
        1-Z & 1\\ 1+Z& -1
    \end{pmatrix},\\
   S\Le
    S^{-1}=
    \begin{pmatrix}
        X &0\\0&-X^\dagger
    \end{pmatrix},
\end{gathered}
\label{similar}
\end{equation}
where $X=-iH+M_l+M_g^T$.  $Z$ is a Hermitian matrix and determined by the following equation 
\begin{align}\label{z}
    ZX+X^\dagger Z+2(M_g^T-M_l)=0.
\end{align}
Physically, $Z$ maps to the NESS correlation matrix $C_{ij}=\operatorname{Tr}(\hrhos \hat{c}_i^\dagger\hat{c}_j)$ through $C=(1+Z^T)/2$.

The quasi-block-diagonal structure of $\Le$ suggests that it could be topologicall trivial, since the matrix $X\oplus(-X^\dagger)$ is a trivial matrix. To see this, note that
the Hermitian part of $X$ is a positive semidefinite matrix. As a result, real eigenvalues of $X$ is always non-negative.
When $\Le$ is gapped, real eigenvalues of $X$ are positive. All the occupied bands of $X\oplus(-X^\dagger)$ (those with positive real eigenvalues) together correspond to \textit{all} bands of $X$, thus should be trivial.

We now turn this intuition into a proof in two steps: first, we demonstrate $\Le$ can always be adiabatically connected to a Hermitian matrix $L_H$ with any unitary or antiunitary symmetry constraints.    Then we reveal that $L_H$ exhibits a trivial band topology for all of the occupied bands together.

\textit{\color{blue} Step one: Hermitianization.}
For the first step, we establish the adiabatic path that connects $\Le$ to Hermitian matrix $L_H$.
\begin{thm}\label{thm1}
    There is a continuous path defined by $L(\lambda)$ for $\lambda\in[0,1]$, such that the followings are satisfied:
    \begin{itemize}
        \item $L(0)= L_{\mathrm{eff}}$, and
        \begin{align}
            L(1)=L_H\equiv \begin{pmatrix}
                -Z &1 \\ 1& Z
            \end{pmatrix}.
        \end{align}
        \item \textbf{Adiabaticity}: If $L_{\mathrm{eff}}$ is gapped, $L_\lambda$ is gapped for any $\lambda\in[0,1]$. 
        \item \textbf{Symmetry-preserving}: If $L_{\mathrm{eff}}$ has any unitary or anti-unitary symmetries,  $L(\lambda)$ has the same symmetries for all $\lambda\in[0,1]$.
    \end{itemize}
\end{thm}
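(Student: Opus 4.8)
The plan is to exploit the quasi-block-diagonal form~\eqref{similar}. Since $S\Le S^{-1}=X\oplus(-X^\dagger)$ with $X=-iH+M_l+M_g^T$, and the Hermitian part of $X$ is $M_l+M_g^T\succeq 0$, every eigenvalue of $X$ has $\operatorname{Re}\ge 0$; the gap hypothesis upgrades this to $\operatorname{Re}\ge\Delta/2>0$, so $\operatorname{spec}(X)$ lies in the open right half-plane and $\operatorname{spec}(-X^\dagger)$ in the open left half-plane. I would also record that $Z$ is Hermitian with $\|Z\|\le 1$, inherited from $0\preceq C=(1+Z^T)/2\preceq 1$. The target $L_H=\begin{pmatrix}-Z&1\\1&Z\end{pmatrix}$ depends only on $Z$ and obeys $L_H^2=1+Z^2\succeq 1$, hence is automatically gapped; this is the clue that the endpoint must be built from $Z$ rather than from a naive flattening of $X$.

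I would construct the path in two canonical stages. Stage one keeps $S$ fixed and deforms only the diagonalized block, setting $L(\lambda)=S^{-1}\big(X(\lambda)\oplus(-X(\lambda)^\dagger)\big)S$ with $X(\lambda)=X^{\,1-\lambda}$ defined through the principal branch (holomorphic functional calculus, legitimate since $\operatorname{spec}(X)$ avoids $(-\infty,0]$). Because a similarity preserves the spectrum and $\operatorname{spec}(X^{1-\lambda})=\{\mu^{1-\lambda}\}$ stays in the open right half-plane ($\arg\mu^{1-\lambda}=(1-\lambda)\arg\mu\in(-\pi/2,\pi/2)$), $L(\lambda)$ never acquires a purely imaginary eigenvalue, so the line gap stays open; here the bounded-spectrum assumption is convenient to keep this gap uniformly bounded away from zero. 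At $\lambda=1$ one reaches $S^{-1}(I\oplus -I)S=\begin{pmatrix}-Z&1-Z\\1+Z&Z\end{pmatrix}$, which differs from $L_H$ only in its off-diagonal blocks. Stage two removes that difference linearly, $L(\lambda)=\begin{pmatrix}-Z& 1-(1-\lambda)Z\\ 1+(1-\lambda)Z & Z\end{pmatrix}$ (after reparametrizing the two stages onto a single $[0,1]$), ending exactly at $L_H$. Since every block is a function of the single Hermitian matrix $Z$, the blocks commute and a short computation gives $L(\lambda)^2=\big(1+\lambda(2-\lambda)Z^2\big)$ on each diagonal block with vanishing off-diagonal blocks, so $L(\lambda)^2\succeq 1$; any eigenvalue $\mu$ then satisfies $\mu^2\ge 1$, forbidding imaginary values and keeping the gap open.

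For the symmetry-preserving property, the key point I would stress is that the whole path is a fixed canonical functional of $\Le$: $X$, $Z$ and $S$ are determined matrix functions of $(H,M_l,M_g)$, and the operations used — principal powers $X^{1-\lambda}$ and affine combinations of $Z$ and the identity — commute with any unitary symmetry, and, because the power and logarithm on the right half-plane have real Taylor coefficients while the half-plane is conjugation-invariant, also intertwine correctly with any antiunitary symmetry. Hence every unitary or antiunitary relation obeyed by $\Le$ is inherited by $L(\lambda)$ for all $\lambda$.

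The hard part will be achieving gap- and symmetry-preservation \emph{simultaneously}. The naive move of scaling the anti-Hermitian part $-iH$ of $X$ to zero fails precisely when the dissipative part $M_l+M_g^T$ is degenerate: the flattened block then develops a zero eigenvalue and the gap closes, as the example $X=\left(\begin{smallmatrix}1&-c\\ c&0\end{smallmatrix}\right)$ already shows. Replacing it by the functional-calculus flattening $X^{1-\lambda}$ is what holds the spectrum strictly in the open half-plane throughout. The remaining delicate point, which I would treat carefully, is the symmetry bookkeeping through the non-unitary similarity $S$: because $S$ is not unitary it can mix the $X$ and $-X^\dagger$ blocks under an antiunitary or particle-hole-type symmetry, so one must verify that the induced action on the blocks is compatible with the principal-power map. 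Establishing this covariance across all symmetry classes is the crux; the accompanying gap estimates are then comparatively routine.
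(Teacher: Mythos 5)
Your path has the same skeleton as the paper's and your gap estimates are correct. In fact your stage two is the paper's second half-path in disguise: with $\mu=1-\lambda$, your matrix equals $\tfrac{1+\mu}{2}L(\tfrac12)+\tfrac{1-\mu}{2}L^\dagger(\tfrac12)$ where $L(\tfrac12)=\bigl(\begin{smallmatrix}-Z & 1-Z\\ 1+Z & Z\end{smallmatrix}\bigr)$, and your identity $L(\lambda)^2=1+\lambda(2-\lambda)Z^2\succeq 1$ plays the role of the paper's bound $(L(\lambda))^2\geq\tfrac12$. Your stage one replaces the paper's linear interpolation toward the flattened matrix $L_+-L_-=S^{-1}(1\oplus -1)S$ by the principal-power flattening $X^{1-\lambda}$; both keep the line gap open (yours needs the bounded-spectrum assumption for a uniform bound, as you note, whereas the linear flattening gives $\min\{1,\Delta\}$ directly).

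The genuine gap is the one you flag yourself at the end: symmetry preservation is asserted, never proven, and it is the substantive clause of the theorem (adiabaticity alone is comparatively easy). The appeal to "$X$, $Z$, $S$ are canonical functionals of $(H,M_l,M_g)$" cannot work as stated. A symmetry of $\Le$ in the relevant non-Hermitian classes --- $\Le=\eta U\chi[\Le]U^{-1}$ with $\chi\in\{\mathrm{id},\ast,T,\dagger\}$ and $\eta=\pm1$ --- does not act on $H$, $M_l$, $M_g$ separately; worse, the sign-reversing symmetries (the P type always, and K/C/Q with $\eta=-1$) map the right-half-plane spectrum onto the left-half-plane spectrum, i.e.\ they \emph{exchange} the blocks $X$ and $-X^\dagger$, and the non-unitary $S$ does not intertwine this exchange in any transparent way. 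So the "covariance across all symmetry classes" you defer is exactly the theorem. The clean repair --- and this is precisely the paper's mechanism --- is to bypass the block decomposition entirely: your stage one is holomorphic functional calculus applied to $\Le$ itself, $L(\lambda)=f_\lambda(\Le)$ with $f_\lambda(z)=z^{1-\lambda}$ for $\operatorname{Re}z>0$ and $f_\lambda(z)=-(-z)^{1-\lambda}$ for $\operatorname{Re}z<0$, written as contour integrals over $\mathcal{C}_+$ and $\mathcal{C}_-$. Because $f_\lambda$ is odd, $f_\lambda(-z)=-f_\lambda(z)$, and conjugation-real, $f_\lambda(\bar z)=\overline{f_\lambda(z)}$, the resolvent identity $(z-\Le)^{-1}=\eta U\,(\eta z-\chi[\Le])^{-1}U^\dagger$ pushes every one of the four symmetry types through the integral for either sign of $\eta$ (complex conjugation also reverses the contour orientation, cancelling the sign from $i\to-i$), which is exactly how the paper handles its own flattening $f(z)=\operatorname{sgn}\operatorname{Re}z$. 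Stage two then inherits the symmetries because it is a real linear combination of $L(\tfrac12)$ and $L^\dagger(\tfrac12)$, and each of the four symmetries of a matrix is shared by its Hermitian conjugate. Without this reformulation (or an equivalent case-by-case verification through the blocks), your proof establishes the endpoints and the gap but not the symmetry-preserving property, so it does not yet prove the theorem.
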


The explicit construction of adiabatic path is as follows. For $\lambda\in[0,1/2]$,
\begin{equation}
    L(\lambda)=(1-2\lambda)L_\mathrm{eff}+2\lambda\Big(\oint_{\mathcal{C}_+}-\oint_{\mathcal{C}_-} \Big)     \frac{\mathrm{d} z}{2\pi i}\frac{1}{z-L_\mathrm{eff}}.
\end{equation}
For $\lambda\in[1/2,1]$,
\begin{equation}
L(\lambda)=2(1-\lambda)L\Big(\frac{1}{2}\Big)+\frac{1}{2}(2\lambda-1)\Big(L\Big(\frac{1}{2}\Big)+L^\dagger\Big(\frac{1}{2}\Big)\Big).
\end{equation}
Here the first half path is used for spectrum flattening. The complex contour $\mathcal{C}_+$ is the counter-clockwise contour that encloses all eigenvalues of $X$, which is always well-defined under the gapped and bounded spectrum conditions. $\mathcal{C}_-$ is the symmetric contour of $\mathcal{C}_+$ with imaginary axis.  Then we have
\begin{align}
    L\Big(\frac{1}{2}\Big)&=S^{-1}\Big[\Big(\oint_{\mathcal{C}_+}-\oint_{\mathcal{C}_-} \Big)     \frac{\mathrm{d} z}{2\pi i}
    \bigg(z-
    \begin{pmatrix}
        X & 0 \\0 &-X^\dagger
    \end{pmatrix}
    \bigg)^{-1}\Big]S\notag\\
    &=S^{-1}\begin{pmatrix}
        1 & 0\\ 0 & -1
    \end{pmatrix}
    S
    =
    \begin{pmatrix}
        -Z & 1-Z \\ 1+Z & Z
    \end{pmatrix}.
\end{align}
We emphasize that the above equation holds regardless of whether $X$ is diagonalizable.  Consequently, the adiabatic path remains valid even with the presence of exceptional points\cite{bergholtz2021exceptional,NHP}.

The second half path turns the non-Hermitian matrix $L(1/2)$ to a Hermitian matrix $L(1)$. It is easily verified that $L(1)=\big(L(1/2)+L(1/2)^\dagger\big)/2=L_H$. It remains to prove the adiabaticity and symmetry-preserving of the path. A full proof of Theorem.~\ref{thm1} can be found in SM. See also Ref\cite{NHP}.

Theorem.~\ref{thm1} establishes the topological equivalence between $\Le$ and $L_H$. Specifically, occupied bands of $\Le$ are transformed to positive-energy bands of $L_H$. Topology of $L_H$ is fully understood under the conventional framework for Hermitian matrices.

\textit{\color{blue} Step two: Topological triviality of $L_H$.}
Crucially, $L_H$ has only trivial band topology when considering all the positive bands together. We sketch the proof in below
\begin{thm}\label{thm2}
    Topology of all the positive-energy bands of $L_H$ together are trivial.
\end{thm}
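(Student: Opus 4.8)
The plan is to exhibit an explicit gapped, symmetry-preserving homotopy that contracts the positive-energy subspace of $L_H$ to a momentum-independent one. First I would rewrite $L_H$ in the factorized form $L_H=\sigma_x\otimes\mathbb{I}-\sigma_z\otimes Z$ on the doubled space (block space $\otimes$ mode space), where $Z$ is the Hermitian matrix of Eq.~\eqref{z}. Since $\{\sigma_x,\sigma_z\}=0$, the two terms anticommute, so $L_H^2=\sigma_x^2\otimes\mathbb{I}+\sigma_z^2\otimes Z^2=\mathbb{I}_2\otimes(\mathbb{I}+Z^2)\ge\mathbb{I}$. This already shows $L_H$ is gapped with eigenvalues $\pm\sqrt{1+z_a^2}$ (the $z_a$ being the eigenvalues of $Z$), so the positive subspace is well defined and $N$-dimensional. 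It also exposes a built-in chiral structure, $(\sigma_y\otimes\mathbb{I})L_H(\sigma_y\otimes\mathbb{I})=-L_H$.

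The central step is the linear homotopy $L_H(t)=(1-t)\,\sigma_x\otimes\mathbb{I}+t\,L_H=\sigma_x\otimes\mathbb{I}-t\,\sigma_z\otimes Z$ for $t\in[0,1]$, which simply rescales $Z\to tZ$. Each $L_H(t)$ is Hermitian and, by the same anticommutation computation, satisfies $L_H(t)^2=\mathbb{I}_2\otimes(\mathbb{I}+t^2Z^2)\ge\mathbb{I}$, so the gap never closes along the path. At the endpoint $t=0$ we reach $L_H(0)=\sigma_x\otimes\mathbb{I}$, whose positive-energy eigenspace is the constant subspace $v_+\otimes\mathbb{C}^N$ with $v_+=(1,1)^T/\sqrt{2}$. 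A momentum-independent projector is manifestly topologically trivial, and continuity of $P_+(t)$ along a gapped path then forces the positive subspace of $L_H=L_H(1)$ to be trivial as well. Equivalently, one sees the triviality directly: solving the eigenproblem block by block shows the positive subspace is the graph $\{(\psi,g(Z)\psi):\psi\in\mathbb{C}^N\}$ with $g(Z)=Z+\sqrt{\mathbb{I}+Z^2}>0$, and the projection onto the upper block $\psi$ is a global bundle isomorphism onto the trivial bundle $\mathbb{C}^N$.

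To make this relevant to symmetry-protected topology, I would check that the homotopy respects every unitary and anti-unitary symmetry. The symmetries inherited from $\Le$ act on the doubled space in factorized form $\mathcal{S}=g\otimes U$, with $g$ on the two-dimensional block space and $U$ on the mode space. Writing the defining relation $\mathcal{S}L_H\mathcal{S}^{-1}=\eta L_H$ with $\eta=\pm1$ and matching the term proportional to the identity in mode space forces $g\sigma_x g^{-1}=\eta\sigma_x$, so that $\sigma_x\otimes\mathbb{I}$ carries the same symmetry, with the same sign $\eta$, as $L_H$ itself. Since rescaling $Z\to tZ$ only multiplies the remaining term by a real scalar, the relation $\mathcal{S}L_H(t)\mathcal{S}^{-1}=\eta L_H(t)$ then holds for all $t$ (real scalars commute with anti-unitary operations as well). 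Hence the contraction is simultaneously gap-preserving and symmetry-preserving, establishing triviality in any symmetry class.

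I expect the symmetry bookkeeping, rather than the homotopy itself, to be the main obstacle: one must verify that the physical (anti)unitary symmetries genuinely act in the factorized block $\otimes$ mode form, and that they cannot mix the $\sigma_x\otimes\mathbb{I}$ and $\sigma_z\otimes Z$ pieces in a way that would obstruct the rescaling of $Z$. The gap argument and the $t=0$ trivialization are short computations; the real care lies in showing that no symmetry can protect the positive subspace against this deformation.
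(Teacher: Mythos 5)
Your Hermitian-side algebra is correct, and taken by itself it constitutes a genuinely different route from the paper's: the anticommutation trick giving $L_H(t)^2=\mathbb{I}_2\otimes(\mathbb{I}+t^2Z^2)\geq\mathbb{I}$, and especially the graph description of the positive subspace as $\{(\psi,(Z+\sqrt{\mathbb{I}+Z^2})\,\psi)\}$ with projection onto the first factor as a global bundle isomorphism, prove plain (non-symmetry-protected) triviality directly. The paper instead diagonalizes $L_H$ into tensor-product Bloch states $\ket{v^{+,m}_{\vec{k}}}\otimes\ket{\psi^m_{\vec{k}}}$, shows $\ket{v^{+,m}_{\vec{k}}}$ cannot wind around $\mathcal{S}^1$ because $\braket{\sigma_x}>0$, and then uses the ring structure of K-theory to reduce $[E_\phi]$ to the class of the constant full-rank bundle of $Z$; your graph argument gets the same conclusion more economically and does not even require the bands of $Z$ to be nondegenerate.

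The genuine gap is exactly where you suspected: the symmetry bookkeeping, and it is fatal as written because the theorem's content is triviality in the symmetry-constrained (equivariant or Real) K-group. First, the symmetries inherited from $\Le$ through Theorem~\ref{thm1} are arbitrary unitaries or antiunitaries on $\mathbb{C}^2\otimes\mathbb{C}^N$; Eq.~\eqref{bl} imposes no factorized form $g\otimes U$, so that ansatz is unjustified. Second, even for factorized symmetries your "matching" step fails whenever $Z$ has a component proportional to the identity, since a symmetry may then mix $\sigma_x\otimes\mathbb{I}$ with $\sigma_z\otimes Z$. A concrete counterexample inside the paper's class of models: uniform pure loss ($H=0$, $M_g=0$, $M_l=\mathbb{I}$) gives $X=\mathbb{I}$ and, from Eq.~\eqref{z}, $Z=\mathbb{I}$, hence $L_H=(\sigma_x-\sigma_z)\otimes\mathbb{I}$. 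The unitary $\mathcal{S}=\tfrac{1}{\sqrt{2}}(\sigma_x+\sigma_z)\otimes\mathbb{I}$ is a legitimate P-type symmetry of $\Le$ itself (one checks $\Le=-\mathcal{S}\Le\mathcal{S}^{-1}$ for this model), so by Theorem~\ref{thm1} it is carried to $L_H$, where $\mathcal{S}L_H\mathcal{S}^{-1}=-L_H$; but along your path $\mathcal{S}L_H(t)\mathcal{S}^{-1}=(\sigma_z-t\sigma_x)\otimes\mathbb{I}\neq-L_H(t)$ for every $t<1$, because $\mathcal{S}$ swaps $\sigma_x\leftrightarrow\sigma_z$ and thus trades the two terms you rescale asymmetrically. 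The consequence is not cosmetic: two bundles isomorphic as plain bundles---which is all the $t=0$ endpoint gives---can be inequivalent once isomorphisms are required to intertwine the symmetry structures, so reaching $\sigma_x\otimes\mathbb{I}$ proves nothing about the protected class unless every symmetry survives the whole path. The paper avoids this issue entirely by never deforming $L_H$ in Theorem~\ref{thm2}: it evaluates the K-class of the occupied bundle in place via $[E_\phi]=\sum_m[E^{m}_\psi]\cdot[E^{m}_\nu]=[E_\psi]=1$, so no homotopy of the Hamiltonian, and hence no symmetry-preservation of a homotopy, is ever needed. To repair your route you would need a contraction built covariantly from $L_H$ itself, so that every symmetry is automatically carried along; the linear rescaling $Z\to tZ$ is not of that form.
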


Assume $Z$  has $M$ distinct energy bands. The energy and Bloch wavefunctions are denoted as $\alpha^m_{\vec{k}}$ and $\ket{\psi^m_{\vec{k}}}$, for $1\leq m\leq M$. Then we have
\begin{align}
    L_H&=-\sigma_z\otimes \Big(\sum_{m,\vec{k}}\alpha^m_{\vec{k}}\ket{\psi^m_{\vec{k}}}\bra{\psi^m_{\vec{k}}}\Big)
    +\sigma_x\otimes \Big(\sum_{m,\vec{k}}\ket{\psi^m_{\vec{k}}}\bra{\psi^m_{\vec{k}}}\Big)\notag\\
    &=\sum_{m,\vec{k}} (-\alpha^m_{\vec{k}}\sigma_z+\sigma_x)\otimes \ket{\psi^m_{\vec{k}}}\bra{\psi^m_{\vec{k}}}.
\end{align}
The eigenvalues $E_k$ and Bloch wavefunctions $\ket{\phi_k}$ of $L_H$ are solvable. $L_H$ has $2M$ energy bands. For $1\leq m\leq M$, 
\begin{align}
    E^m_{\vec{k}}=\sqrt{1+(\alpha^m_{\vec{k}})^2},
    \ket{\phi^m_{\vec{k}}}=\ket{v^{+,m}_{\vec{k}}}\otimes \ket{\psi^m_{\vec{k}}}.
\end{align}
For $M+1\leq m\leq 2M$, 
\begin{align}
    E^m_{\vec{k}}=-\sqrt{1+(\alpha^{m-M}_{\vec{k}})^2},
    \ket{\phi^m_{\vec{k}}}=\ket{v^{-,m}_{\vec{k}}}\otimes \ket{\psi^{m-M}_{\vec{k}}}.
\end{align}
Here 
\begin{align}
    \ket{v^{\pm, m}_{\vec{k}}}=\frac{1}{\sqrt{2\big((\alpha^m_{\vec{k}})^2+1\big)}}
    \begin{pmatrix}
        1 \\ \alpha^m_{\vec{k}}\pm\sqrt{(\alpha^m_{\vec{k}})^2+1}
    \end{pmatrix}.
\end{align}
The occupied bands are $\ket{\phi^m_{\vec{k}}}$ for $1\leq m\leq M$.

The tensor product structure of $\ket{\phi^m_{\vec{k}}}$ ensures that we can treat $\ket{v^{+,m}_{\vec{k}}}$ and $\ket{\psi^m_{\vec{k}}}$ separately\cite{kitaev2009periodic,stone2010symmetries} (See SM for a formal justification). For any  $m$,
$\ket{v^{+,m}_{\vec{k}}}$ is a two-dimensional wavefunction resides on the one-dimensional sphere $\mathcal{S}^1$ in the $x-z$ plain, because $\bra{v^{+,m}_{\vec{k}}}\sigma_y\ket{v^{+,m}_{\vec{k}}}=0$.
Consequently, $\ket{v^{+,m}_{\vec{k}}}$ is topologically nontrivial only in one dimension\cite{kitaev2009periodic,ryu2010topological,wen2012symmetry,chiu2016classification}. The topology thereof characterizes the winding around $\mathcal{S}^1$ as $\vec{k}$ traverses the Brillouin zone.
However, 
\begin{align}
\bra{v^{+,m}_{k}}\sigma_x\ket{v^{+,m}_{k}}=\frac{2\big(\alpha^m_{k}+\sqrt{(\alpha^m_{k})^2+1}\big)}
    {\sqrt{2\big((\alpha^m_{k})^2+1\big)}}>0
\end{align}
always holds. As a result, $\ket{v^{+,m}_{k}}$ cannot wrap around $\mathcal{S}^1$.  $\ket{v^{+,m}_{\vec{k}}}$ is  topologically trivial.

Hence, topology of $\ket{\phi^m_{\vec{k}}}=\ket{v^{+,m}_{\vec{k}}}\otimes \ket{\psi^m_{\vec{k}}}$ is solely determined by that of $\ket{\psi^m_{\vec{k}}}$. When summing up $1\leq m\leq M$, we deduce that
\begin{align}\label{principle}
\begin{gathered}
    \text{topology of occupied bands of }L_H\notag\\
    \cong\text{topology of \textit{all} bands of }Z.
    \end{gathered}
\end{align}
For any Hamiltonian with a bounded spectrum, the state that all fermion modes all occupied can be continuously deformed to an atomic insulator. As a result, topology of all bands of $Z$ should be trivial. Therefore the occupied bands of $L_H$ are  also trivial, proving Theorem.~\ref{thm2}.

 Theorem.~\ref{thm1} and \ref{thm2} apply to all the gapped $U(1)-$symmetric quadratic Lindbladians whose spectrum is bounded. 
We proved that these systems must have trivial band topology. Thus, symmetry-protected dissipation-free topological edge states cannot appear.


\textit{\color{blue} Generalization to Majorana fermion systems.}
Our theorems directly extend to quadratic Lindbladians built from Majorana fermions. As evident from the proof, it relies essentially on the existence of similarity transformation \eqref{similar}. Now we demonstrate that for Majorana Lindbladians there exists an analogous  transformation for their matrix representations. Thus subsequent analysis can be performed with the same manner.

For quadratic Lindbladians with Majorana fermions, the system Hamiltonian is the quadratic operator with $2N$ Majorana fermions $\hat{H}=H_{ij}\hat{\gamma}_i\hat{\gamma}_j$ with $H=-H^T$. The dissipation term is $\hat{L}_\mu=l_{\mu i}\hat{\gamma}_i$. Majorana operators satisfy $\{\hat{\gamma}_i,\hat{\gamma}_j\}=2\delta_{ij}$. This type of Lindbladian can be written as a Bogoliubov-de-Gennes form superoperator
\begin{align}
    \sL&=\begin{pmatrix}
		\lambda^\dagger &\lambda
	\end{pmatrix}
	\begin{pmatrix}
		-2iH-M-M^T & 2 (M-M^T)\\
        0 & -2iH+M+M^T
	\end{pmatrix}
	\begin{pmatrix}
		\lambda \\ \lambda^\dagger
	\end{pmatrix}\notag\\
    &=\begin{pmatrix}
		\lambda^\dagger &\lambda
	\end{pmatrix}
    \Le^{\operatorname{BdG}}
    \begin{pmatrix}
		\lambda \\ \lambda^\dagger
	\end{pmatrix},
\end{align}
where we introduce positive-semidefinite Hermitian matrix $M_{ij}=\sum_\mu l_{\mu i}l_{\mu j}^*$. $\lambda$ is the shorthand of $(\lambda_1,\cdots,\lambda_{2N})^T$, with $\lambda_i$ a fermion superoperator defined by $\lambda^\dagger_i[\hrho]\equiv\frac{1}{2}(\hat{\gamma}_i\hrho+\mathcal{P}^F[\hrho]\hat{\gamma}_i)$ and $\lambda_i[\hrho]\equiv\frac{1}{2}(\hat{\gamma}_i\hrho-\mathcal{P}^F[\hrho]\hat{\gamma}_i)$.
Topological zero edge modes of $\Le^{\operatorname{BdG}}$, if exist, correspond to dissipation-free topological edge states of $\sL$.

$\Le^{\operatorname{BdG}}$ admits a similar transformation
\begin{equation}
\begin{gathered}
T = \frac{1}{\sqrt{2}}
    \begin{pmatrix}
        1 & Y\\ 0 & 1
    \end{pmatrix},
    T^{-1} = \frac{1}{\sqrt{2}}
    \begin{pmatrix}
        1 & -Y \\ 0 & 1
    \end{pmatrix},\\
   T\Le^{\operatorname{BdG}}
    T^{-1}=
    \begin{pmatrix}
        -Q^\dagger &0\\0&Q
    \end{pmatrix},
\end{gathered}
\end{equation}
with $Q=-2iH+M+M^T$. $Y$ is determined by
\begin{align}
    YQ^*+Q^TY+2(M-M^T)=0.
\end{align}
Following the same procedure demonstrated previously, one can prove that $\Le^{\operatorname{BdG}}$ is also topologically trivial.

\textit{\color{blue} Summary and discussions.}
In this paper we prove that  fermionic quadratic Lindbladians can only have trivial band topology, thus forbidding the presence of dissipation-free topological edge states. The only assumptions we impose are gapped bulk and bounded spectrum, rendering the most general applicability of our results. In addition to translation-invariant short range systems, our results remain valid for long-range interacting systems, and even Lindbladians with exceptional points. Regarding symmetries, our theorems apply to symmetry-protected band topology by not only antiunitary symmetries but also the unitary ones, including the cases of lattice symmetries such as translation and inversion. This comprehensive approach asserts that any dissipation-free topological edge states protected by these symmetries are not possible. 

Our results build upon, and differentiate from, previous studies concerning topological states in quadratic open fermions. While early research on their dissipative preparation revealed specific models with dissipation-free edge states, their stability under generic perturbation was unclear\cite{diehl2011topology,bardyn2013topology, budich2015dissipative,iemini2016dissipative,goldstein2019dissipation,flynn2021topology}. We demonstrate that these edge states are not robust against general perturbations. Conversely, the damping matrix X itself hosts robust topological edge states\cite{lieu2020tenfold,kawasaki2022topological}, though these only affect transient dynamics and are not dissipation-free. Our results rely on the line-gap of $\Le$. For non-Hermitian matrices, there is another type of gap, the ``point-gap", that is intimately related to the non-Hermitian skin effect. The point-gap topology and skin effect of Lindbladian is also extensively explored recently, in both non-interacting and interacting systems\cite{1,2,3,4,5,6,7,8}.

It is important to note that the conditions of a gapped bulk and bounded spectrum, while quite generic, can be violated in closed systems. For instance, the Weyl semimetal\cite{yan2017topological} famously feature topologically protected gapless bulk spectra. Similarly, the quantum Hall system\cite{von202040} serves as a well-known example of a topological system with an unbounded spectrum. Identifying the dissipative counterparts of these systems may lead to the discovery of new family of topologically nontrivial open fermionic matter. 

While our results apply to non-interacting fermions, interaction can drastically change the picture. Recently, it was shown that steady state of interacting Lindbladians can host topological orders without any symmetry protection\cite{fan2024diagnostics,bao2023mixed,sohal2025noisy,ellison2025toward,wang2025anomaly,lessa2025mixed,wang2025intrinsic}, implying the fundamental difference between non-interacting systems and the interacting ones. Understanding the connections and differences would be a crucial next step toward a more comprehensive understanding of topological phenomena in open quantum systems. We leave these compelling directions for future studies.

\textit{\color{blue} Acknowledgement.}
We are grateful to Hui Zhai, Zhong Wang, Chao-Ming Jian, Fei Song, He-Ran Wang, Zijian Wang, Fan Yang and Tianshu Deng for helpful discussions.

\bibliography{ref}

\onecolumngrid

\clearpage
\subsection*{\large Supplementary Materials for: Absence of dissipation-free topological edge states in quadratic open fermions}

\begin{center}
\noindent {Liang Mao} \\
\vspace{5pt}
\noindent \textit{Institute for Advanced Study, Tsinghua University, Beijing, 100084, China} \\
\noindent \textit{Institute for Quantum Information and Matter, California Institute of Technology, Pasadena, CA 91125, USA}\\
\end{center}
\vspace{10pt}
\normalsize
\setcounter{equation}{0}
\setcounter{figure}{0}
\setcounter{table}{0}
\setcounter{section}{0}
\setcounter{thm}{0}
\setcounter{page}{1}
\renewcommand{\theequation}{S\arabic{equation}}
\renewcommand{\thefigure}{S\arabic{figure}}
\renewcommand{\thetable}{S\arabic{table}}
\renewcommand{\bibnumfmt}[1]{[S#1]}

\twocolumngrid
For the sake of technical rigor, we present and prove formal versions of our theorems.

\subsection{Theorem. 1 and proof}
Before stating the theorem, we clarify the relevant symmetries of non-Hermitian matrices. According to\cite{bl1,bl2,bl3,bl4,bl5}, there are four types of symmetries warrant consideration:
\begin{align}\label{bl}
	\text{K sym:}&\quad \Le=\eta_K U_K\Le^*U_K^{-1},\notag\\
	\text{C sym:}&\quad \Le=\eta_C U_C\Le^TU_C^{-1},\notag\\
	\text{Q sym:}&\quad \Le=\eta_Q U_Q\Le^\dagger U_Q^{-1},\notag\\
	\text{P sym:}&\quad \Le=-U_P\Le U_P^{-1}.
\end{align}
$U_{K,C,Q,P}$ are unitary matrices. $\eta_{K,C,Q}=\pm 1$ denote the sign of the symmetries. The adiabatic path should preserve all the symmetries above.

\begin{thm}
    [formal]
    The following continuous path connects $L(0)=\Le$ and $L(1)=L_H$:
    \begin{itemize}
        \item For $\lambda\in[0,1/2]$,
        \begin{align}
            L(\lambda)=(1-2\lambda)L_\mathrm{eff}+2\lambda\Big(\oint_{\mathcal{C}_+}-\oint_{\mathcal{C}_-} \Big)     \frac{\mathrm{d} z}{2\pi i}\frac{1}{z-L_\mathrm{eff}}.\notag
        \end{align}
        $\mathcal{C}_+$ is the counter-clockwise contour that encloses all the eigenvalues of $X$. $\mathcal{C}_-$ is the symmetric contour of $\mathcal{C}_+$ with imaginary axis.
        \item For $\lambda\in[1/2,1]$,
\begin{equation}
L(\lambda)=2(1-\lambda)L\Big(\frac{1}{2}\Big)+\frac{1}{2}(2\lambda-1)\Big(L\Big(\frac{1}{2}\Big)+L^\dagger\Big(\frac{1}{2}\Big)\Big).\notag
\end{equation}
    \end{itemize}
    Moreover, the path satisifes:
    \begin{itemize}
        \item If all the eigenvalues $\epsilon_i(\Le)$  satisfy $|\operatorname{Re}(\epsilon_i(\Le))|\geq\Delta$, then $|\operatorname{Re}(\epsilon_i(L(\lambda)))|\geq\min\{\sqrt{2}/2,\Delta\}$ for all $\lambda\in[0,1]$ and all $i$.
        \item If $\Le$ has any of the symmetries in Eq.\eqref{bl}, then $L(\lambda)$ has the same symmetries for all $\lambda\in[0,1]$.
    \end{itemize}
\end{thm}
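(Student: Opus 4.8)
\emph{Proof plan.} I would verify the three claims---the endpoints, the adiabaticity (gap-preservation), and the symmetry-invariance---by working in two explicit bases: the similarity transformation $S$ of Eq.~\eqref{similar} for the first half $\lambda\in[0,1/2]$, and the eigenbasis of the Hermitian matrix $Z$ of Eq.~\eqref{z} for the second half $\lambda\in[1/2,1]$. The first move is to recognize the contour integral $F(\Le)\equiv\big(\oint_{\mathcal{C}_+}-\oint_{\mathcal{C}_-}\big)\tfrac{\D z}{2\pi i}\,(z-\Le)^{-1}$ as the difference $P_+-P_-$ of Riesz spectral projectors onto the positive- and negative-real-part eigenspaces. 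Conjugating by $S$ sends $\Le\mapsto\operatorname{diag}(X,-X^\dagger)$; since the Hermitian part of $X$ equals $M_l+M_g^T\succeq0$ and the bulk is gapped, every eigenvalue of $X$ has real part $\geq\Delta>0$, so $\mathcal{C}_+$ encloses exactly $\operatorname{spec}(X)$ and $F(\Le)=S^{-1}\operatorname{diag}(\mathbb{I},-\mathbb{I})S=L(1/2)$. Crucially this uses only a contour separating the spectrum, so it survives non-diagonalizable $X$ (exceptional points). The stated $L(1/2)$ and $L(1)=\tfrac12\big(L(1/2)+L(1/2)^\dagger\big)=L_H$ then follow by direct matrix multiplication.

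For adiabaticity, in the $S$-basis the first-half path is $\operatorname{diag}\big((1-2\lambda)X+2\lambda\mathbb{I},\,-[(1-2\lambda)X+2\lambda\mathbb{I}]^\dagger\big)$, so its eigenvalues are $(1-2\lambda)x+2\lambda$ for $x\in\operatorname{spec}(X)$; as $1-2\lambda$ and $2\lambda$ are nonnegative and sum to one, $\operatorname{Re}$ is a convex combination of $\operatorname{Re}(x)\geq\Delta$ and $1$, hence $\geq\min\{\Delta,1\}$ (with the mirror block giving the symmetric bound on the negative side). For the second half the key observation is that all four blocks of $L(1/2)$ and of $L(1/2)^\dagger$ are affine functions of the single Hermitian matrix $Z$, so diagonalizing $Z$ simultaneously block-diagonalizes the whole second-half path into decoupled $2\times2$ problems
\[
L_j(\lambda)=\begin{pmatrix}-\alpha_j & 1-\alpha_j t\\[2pt] 1+\alpha_j t & \alpha_j\end{pmatrix},\qquad t=2(1-\lambda)\in[0,1],
\]
indexed by the real eigenvalues $\alpha_j$ of $Z$. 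A $2\times2$ determinant gives $E=\pm\sqrt{1+\alpha_j^2(1-t^2)}$, which are \emph{real} with $|E|\geq1$. Combining the two halves keeps $|\operatorname{Re}\,\epsilon_i|\geq\min\{\Delta,1\}$, at least the claimed $\min\{\sqrt2/2,\Delta\}$.

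For symmetry-invariance the linear structure makes the argument clean. I would first record that the flattening map intertwines with all operations in Eq.~\eqref{bl},
\[
F(\Le)^*=F(\Le^*),\quad F(\Le)^T=F(\Le^T),\quad F(\Le)^\dagger=F(\Le^\dagger),
\]
together with $F(U\Le U^{-1})=U F(\Le)U^{-1}$ and $F(-\Le)=-F(\Le)$; each holds because the operation preserves or reflects the real part of the spectrum while transforming the Riesz projectors (and the contours $\mathcal{C}_\pm$) covariantly. Feeding a symmetry of $\Le$, say the K-type $\Le=\eta_K U_K\Le^*U_K^{-1}$, through these identities shows $F(\Le)$ inherits it, so the first-half combination $(1-2\lambda)\Le+2\lambda F(\Le)$ does too, and likewise for the C-, Q-, and P-types. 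For the second half it remains to check that Hermitian conjugation sends each symmetry type to itself---e.g. taking the adjoint of the K-relation gives $L(1/2)^\dagger=\eta_K U_K\big(L(1/2)^\dagger\big)^*U_K^{-1}$ via $(M^\dagger)^*=M^T$---so $L(1/2)^\dagger$ carries the same symmetries as $L(1/2)$, and any real linear combination of the two preserves them.

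The step I expect to be the main obstacle is the adiabaticity of the second half, where the path interpolates between the non-Hermitian $L(1/2)$ and its adjoint, which do not commute, so the spectrum cannot be read off term by term. The resolution is precisely the reduction to commuting $2\times2$ blocks via the eigenbasis of $Z$, after which the eigenvalues are obtained in closed form and seen to remain real and bounded away from the imaginary axis. A secondary subtlety is justifying the spectral-flattening at exceptional points, which is why I would phrase $F$ through the contour (Riesz-projector) formula rather than an eigenbasis, rendering every manipulation above independent of diagonalizability.
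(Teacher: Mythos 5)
Your proposal is correct, and most of it runs parallel to the paper's own proof: both read the contour integral as the difference of Riesz projectors $P_+-P_-$ onto the right/left half-plane spectral subspaces (hence insensitive to exceptional points), both obtain the first-half gap from the convex combination $(1-2\lambda)\epsilon_i+2\lambda\,\operatorname{sgn}\operatorname{Re}\epsilon_i$, and both prove symmetry preservation by checking that the projector construction intertwines with the four operations $\chi\in\{\mathrm{id},T,*,\dagger\}$ --- with the contour-orientation reversal cancelling the $i\to-i$ sign in the antiunitary cases --- so that the flattened matrix inherits each relation $L_{\mathrm{eff}}=\eta U\chi[L_{\mathrm{eff}}]U^{-1}$ regardless of the sign $\eta$ (your second-half observation that $L^\dagger(1/2)$ carries the same symmetries is exactly the detail the paper states in one terse sentence). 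The genuine difference is the second-half adiabaticity, which you correctly identify as the crux. The paper argues abstractly: since $L(1/2)=L_+-L_-$ is a difference of complementary spectral projectors, $(L(1/2))^2=(L^\dagger(1/2))^2=\mathbb{I}$, hence
\begin{align}
(L(\lambda))^2=\Bigl[2(\lambda-1)^2+\tfrac{1}{2}\Bigr]\mathbb{I}+\Bigl(\tfrac{3}{2}-\lambda\Bigr)\Bigl(\lambda-\tfrac{1}{2}\Bigr)\Bigl[LL^\dagger+L^\dagger L\Bigr],
\end{align}
and positive semidefiniteness of the last term forces every eigenvalue of $(L(\lambda))^2$ to be real and $\geq 1/2$, giving $|E|\geq\sqrt{2}/2$. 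You instead use the explicit form $L(1/2)=\bigl(\begin{smallmatrix}-Z&1-Z\\1+Z&Z\end{smallmatrix}\bigr)$ and diagonalize the Hermitian matrix $Z$, reducing the path to decoupled $2\times 2$ blocks with closed-form eigenvalues $\pm\sqrt{1+\alpha_j^2(1-t^2)}$, $t=2(1-\lambda)$; this is correct and even yields the stronger bound $|E|\geq 1$, so your combined bound $\min\{1,\Delta\}$ implies the stated $\min\{\sqrt{2}/2,\Delta\}$. The trade-off is generality versus explicitness: your computation leans on the quasi-block-diagonal structure with Hermitian $Z$ specific to this class of Lindbladians, whereas the paper's projector-algebra argument uses only the line gap of $L_{\mathrm{eff}}$, so it transfers verbatim to any line-gapped non-Hermitian matrix --- in particular to the Majorana construction, where the similarity transformation and the analogue of $Z$ take a different form.
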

\begin{proof}
    It suffices to prove the two properties of the continuous path for the two half pathes $\lambda\in[0,1/2]$ and $\lambda\in [1/2,1]$.

    \textit{First half path--} Denote $L(1/2) = L_+-L_-$,
    where $L_\pm$ are the (non-Hermitian) projectors onto the eigenspaces of $\Le$ with positive and negative real eigenvalues. Then we have
    \begin{align}
        L(\lambda)=(1-2\lambda)L_\mathrm{eff}+2\lambda(L_+-L_-).
    \end{align}
    Then the eigenvalues satisfy
    \begin{align}
        \epsilon_i(L(\lambda))=(1-2\lambda)\epsilon_i(\Le)+2\lambda \operatorname{sgn}\operatorname{Re} \epsilon_i(\Le).
    \end{align}
    So $|\operatorname{Re}(\epsilon_i(L(\lambda)))|\geq (1-2\lambda)\Delta+2\lambda\geq\min\{1,\Delta\}$.

    Any symmetries can be expressed as 
    \begin{align}
        \Le=\eta U\chi[\Le]U^\dagger,
    \end{align}
    where $\chi[\cdot]$ stands for identity, transpose, complex conjugation or Hermitian conjugation. For all types of $\chi$,
    \begin{align}\label{s-invariant}
        \chi\Big[\oint_{\mathcal{C}_\pm}     \frac{\mathrm{d} z}{2\pi i}\frac{1}{z-L_\mathrm{eff}}\Big]=\oint_{\mathcal{C}_\pm}     \frac{\mathrm{d} z}{2\pi i}\frac{1}{z-\chi[L_\mathrm{eff}]}.
    \end{align}
    This property is obvious for $\chi$ being identity and transpose. For the other two cases, complex conjugation reverses the direction of integration contour to produce a minus sign, which exactly cancels the minus sign from $i\rightarrow -i$. Thus $\chi$ also leaves the whole expression unchanged.

    Applying the symmetry to $L_\pm$, we obtain
    \begin{align}\label{s-inverse}
        \frac{1}{z-\Le}=\frac{1}{z-\eta U\chi[\Le]U^\dagger}=\eta U\frac{1}{\eta z-\chi[\Le]}U^\dagger.
    \end{align}
    Combining Eq.\eqref{s-invariant} and \eqref{s-inverse},
    \begin{align}
        L_+ &= U\Big[\oint_{\mathcal{C}_+}     \frac{\mathrm{d}\eta z}{2\pi i}\frac{1}{\eta z-\chi[L_\mathrm{eff}]}\Big]U^\dagger\notag\\
        &=U\chi\Big[\oint_{\mathcal{C}_+}     \frac{\mathrm{d}\eta z}{2\pi i}\frac{1}{\eta z-\Le}\Big]U^\dagger\notag\\
        &=U\chi[L_\eta]U^\dagger.
    \end{align}
    As a result,
    \begin{align}
        L_+-L_-=\eta U \chi[L_+-L_-]U^\dagger
    \end{align}
    regardless of the sign of $\eta$. This proves the path is symmetry-preserving.

    \textit{Second half path--} The path for $\lambda\in[1/2,1]$ is symmetry-preserving, since $L(1/2)$ has the same symmetries with $\Le$ and $L^\dagger(1/2)$. 

    Expanding the path, we obtain
    \begin{align}
        L(\lambda) = \Big(\frac{3}{2}-\lambda\Big)L\Big(\frac{1}{2}\Big)+
        \Big(\lambda-\frac{1}{2}\Big)L^\dagger\Big(\frac{1}{2}\Big).
    \end{align}
    Since $L(1/2)$ is the minus of two projectors to different subspaces, $(L(1/2))^2=(L^\dagger(1/2))^2=1$. So
    \begin{align}
        (L(\lambda))^2&=\Big[2(\lambda-1)^2+\frac{1}{2}\Big]+\big(\frac{3}{2}-\lambda\Big)
        \Big(\lambda-\frac{1}{2}\Big)\notag\\
        &\quad\quad\cdot\Big[
        L\Big(\frac{1}{2}\Big)L^\dagger\Big(\frac{1}{2}\Big)+
        L^\dagger\Big(\frac{1}{2}\Big)L\Big(\frac{1}{2}\Big)
        \Big]\notag\\
        &\geq \frac{1}{2}.
    \end{align}
    The last inequality results from that the second term is positive semidefinite. As a result, any eigenvalues of $L(\lambda)$ have absolute values greater than $\sqrt{2}/2$.
\end{proof}

\subsection{Theorem. 2 and proof}
We recall that the topology of a quadratic fermion model is formally captured by the vector bundle $E_\phi\rightarrow \mathcal{T}^d$ over $d-$dimensional torus ($d-$dimensional Brillouin zone), whose fiber $F_{\vec{k}}$ is a Hilbert subspace spanned by occupied states. The isomorphic class of $E_\phi$ (under symmetry constraints) forms an element of one of the ten K-groups corresponding to the ten symmetry classes, i.e., $K^{-q}_{\mathbb{C}}(\mathcal{T}^d)$ for $q\in\mathbb{Z}_2$ or $K^{-q}_{\mathbb{R}}(\mathcal{T}^d)$ for $q\in\mathbb{Z}_8$\cite{kitaev2009periodic,ryu2010topological,wen2012symmetry,chiu2016classification,stone2010symmetries}. 
So to claim topological triviality, we need to demonstrate that the isomorphic class of the bundle defined by the occupied states of $L_H$ must be the trivial (identity) element of the K-group, despite symmetry constraints.

\begin{thm}
Let
\begin{align}
P_\phi(k) &= \sum_{m=1}^{M} \ket{\phi_k^{m}}\!\bra{\phi_k^{m}}, \\
\ket{\phi_k^{m}} &= \ket{\nu_k^{+,m}} \otimes \ket{\psi_k^{m}} ,
\end{align}
and assume $k\mapsto P_\phi(k)$ is smooth on $T^d$ and the occupied bands are gapped from the unoccupied bands.
Suppose, for every $m$ and every $k\in T^d$,
\begin{align}
\bra{\nu_k^{+,m}}\sigma_y\ket{\nu_k^{+,m}} &= 0,\\
\bra{\nu_k^{+,m}}\sigma_x\ket{\nu_k^{+,m}} &> 0 .
\end{align}
Then the occupied bundle $E_\phi\to T^d$ is topologically trivial in the appropriate (complex or Real) K-group.
\end{thm}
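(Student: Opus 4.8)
The plan is to deform the occupied projector $P_\phi(k)$, through a continuous family of gapped and symmetry-respecting projectors, to a constant ($k$-independent) projector, which represents the trivial element of the relevant K-group. The starting point is the factorized form $P_\phi(k)=\sum_{m=1}^{M}P_\nu^{+,m}(k)\otimes P_\psi^{m}(k)$, with $P_\nu^{+,m}(k)=\ket{\nu_k^{+,m}}\bra{\nu_k^{+,m}}$ acting on the two-dimensional $\sigma$-space and $P_\psi^{m}(k)=\ket{\psi_k^{m}}\bra{\psi_k^{m}}$ on the orbital space. Two structural facts drive everything: (i) at each $k$ the $\ket{\psi_k^{m}}$ are orthonormal and, summed over \emph{all} occupied $m=1,\dots,M$, complete, $\sum_{m=1}^{M}P_\psi^{m}(k)=\mathbb{I}$, because the occupied bands of $L_H$ are in one-to-one correspondence with all bands of $Z$; (ii) by the hypotheses $\bra{\nu_k^{+,m}}\sigma_y\ket{\nu_k^{+,m}}=0$ and $\bra{\nu_k^{+,m}}\sigma_x\ket{\nu_k^{+,m}}>0$, each $P_\nu^{+,m}(k)$ is confined to a contractible arc of the great circle $\mathcal{S}^1$ lying in the $x$-$z$ plane of the Bloch sphere.

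First I would contract the $\sigma$-factor. Rank-one projectors on $\mathbb{C}^2$ with $\operatorname{tr}(P\sigma_y)=0$ form the $x$-$z$ great circle, and the additional condition $\operatorname{tr}(P\sigma_x)>0$ restricts them to an open half-circle, contractible with a canonical center $P_0=\tfrac{1}{2}(\mathbb{I}+\sigma_x)$. I would let $P_\nu^{+,m}(s,k)$ be the geodesic contraction of $P_\nu^{+,m}(k)$ toward this common $P_0$ inside the arc, so that $\operatorname{tr}(P\sigma_y)=0$ and $\operatorname{tr}(P\sigma_x)>0$ persist for all $s\in[0,1]$. Then $P_\phi(s,k)=\sum_{m}P_\nu^{+,m}(s,k)\otimes P_\psi^{m}(k)$ is, by orthonormality of the $\ket{\psi_k^{m}}$, a rank-$M$ projector for every $(s,k)$, and its flattening $Q(s,k)=\mathbb{I}-2P_\phi(s,k)$ keeps the gap pinned at $\pm1$. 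At $s=1$ the common center factors out, $P_\phi(1,k)=P_0\otimes\sum_m P_\psi^m(k)=P_0\otimes\mathbb{I}$, a constant projector: the occupied bundle is the tensor product of a trivial line bundle on the $\sigma$-factor with the complete (atomic-insulator) bundle on the orbital factor, hence trivial. This settles the case without symmetry.

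The hard part will be upgrading this to an equivariant statement valid in the appropriate complex or Real K-group. Here $\bra{\nu}\sigma_y\ket{\nu}=0$ does double duty: it forces $\sigma_y\ket{\nu_k^{+,m}}\perp\ket{\nu_k^{+,m}}$, equivalently $\sigma_y P_\nu^{+,m}\sigma_y=\mathbb{I}_2-P_\nu^{+,m}$, which together with $\sum_m P_\psi^m=\mathbb{I}$ means $U_P=\sigma_y\otimes\mathbb{I}$ is a \emph{built-in} chiral (P-type) symmetry with $U_P P_\phi(s,k)U_P^{-1}=\mathbb{I}-P_\phi(s,k)$ along the entire path; this symmetry is therefore preserved automatically. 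For the remaining K, C, Q, and unitary P symmetries I would use that any symmetry of $L_H$ must leave invariant the defining conditions $\operatorname{tr}(P\sigma_y)=0$, $\operatorname{tr}(P\sigma_x)>0$ of the occupied states, hence maps the arc to itself and fixes its canonical center $P_0$; since the geodesic contraction is canonical (gauge-choice free), it is compatible with the induced action. The genuinely delicate point, which I expect to be the main obstacle, is that a symmetry operator need not factorize as $U_\sigma\otimes U_Z$; to handle this I would reduce to the action induced on each tensor factor and invoke the standard factorization of topological invariants under graded tensor products \cite{kitaev2009periodic,stone2010symmetries}, whereby the $\sigma$-factor on a contractible arc contributes the identity class in every symmetry class, while the orbital factor, being the complete space $\sum_m P_\psi^m=\mathbb{I}$, is an atomic insulator and thus trivial. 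Combining the two factors gives triviality of $E_\phi$ in each of the ten classes.
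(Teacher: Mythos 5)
Your proof is correct and takes essentially the same route as the paper's: the same tensor-product decomposition of the occupied bundle, triviality of the $\sigma$-factor from its confinement to the contractible half-circle $\operatorname{tr}(P\sigma_x)>0$, and triviality of the orbital factor from completeness $\sum_{m}P_\psi^{m}(k)=\mathbb{I}$. The only difference is presentational --- you realize the paper's K-theoretic class computation $[E_\phi]=\sum_m [E_\psi^{\,m}]\cdot[E_\nu^{\,m}]=[E_\psi]=1$ as an explicit gapped homotopy of projectors onto the constant projector $P_0\otimes\mathbb{I}$ --- and your deferral of the symmetry-equivariance of the tensor factorization to the graded-tensor-product results of the cited references is at the same level of rigor as the paper's own treatment of that point.
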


\begin{proof}
Let $E_\psi^{\,m}$ (resp.\ $E_\nu^{\,m}$) denote the rank-$1$ subbundle spanned by $\ket{\psi_k^{m}}$ (resp.\ $\ket{\nu_k^{+,m}}$). Then
\begin{align}
E_\phi \;\cong\; \bigoplus_{m=1}^{M} \bigl( E_\psi^{\,m} \otimes E_\nu^{\,m} \bigr).
\end{align}
The conditions $\langle\sigma_y\rangle=0$ and $\langle\sigma_x\rangle>0$ imply that $k \mapsto \nu_k^{+,m}$ takes values in the open semicircle $S^1_{x>0}\subset S^1$, which is contractible. Hence  for each $E_\nu^{\,m}$ 
there is a global continuous deformation to trivialize it. In K-theory,
\begin{align}
[E_\nu^{\,m}] = 1 .
\end{align}
Using the ring structure of K-theory ($[E\oplus F]=[E]+[F]$, $[E\otimes F]=[E]\cdot[F]$), we obtain
\begin{align}
[E_\phi]
&= \sum_{m=1}^{M} [E_\psi^{\,m}] \cdot [E_\nu^{\,m}]
 = \sum_{m=1}^{M} [E_\psi^{\,m}]
 = [E_\psi] .
\end{align}
Moreover, $\{\ket{\psi_k^{m}}\}_{m=1}^{M}$ form an orthonormal basis of a fixed subspace $\mathcal H^{\mathrm{dim}(Z)}$ for all $k$, hence
\begin{align}
\sum_{m=1}^{M} \ket{\psi_k^{m}}\!\bra{\psi_k^{m}}
  = \mathbf 1_{\mathcal H^{\mathrm{dim}(Z)}} \qquad \text{for all } k\in T^d ,
\end{align}
which shows $E_\psi \cong T^d \times \mathcal H^{\mathrm{dim}(Z)}$ is the trivial bundle. Therefore $[E_\phi]=1$, as claimed.
\end{proof}

\end{document}